\newtheorem{lemma}[]{Lemma}
\newtheorem{definition}[]{Definition}
\newtheorem{problem}[]{Problem}
\def\a{\mathcal{A}}
\def\b{\mathcal{B}}
\def\c{\mathcal{C}}
\def\I{{\bf I}}
\def\N{{\sf N}}
\def\D{{\sf D}}
\newcommand{\si}{\mathscr{I}}
\newcommand{\usi}{\underline{\mathscr{I}}}
\newcommand{\usj}{\underline{\mathscr{J}}}
\newcommand{\usk}{\underline{\mathscr{K}}}
\newcommand{\hyp}[2]{\mathbb{h}^{#1}_{#2}}
\newcommand{\ssacone}[1]{\text{\raisebox{-.09cm}{\Large{$\boxtimes$}}$^{\text{SSA}}_{#1}$}}
\newcommand{\pmi}{\mathscr{P}}
\newcommand{\pmit}{\mathscr{P}}
\definecolor{maxcolor}{rgb}{1,0.03,0}
\definecolor{dblue}{rgb}{0.25,0.03,0.8}
\definecolor{vecolor}{rgb}{0.7,0.3,0.9}
\definecolor{sergiocolor}{rgb}{0.1,0.7,0.1}
\begin{document}

\title{The quantum marginal independence problem}

\author{Sergio Hern\'andez Cuenca}
\email{sergiohc@physics.ucsb.edu}
\affiliation{Department of Physics, University of California, Santa Barbara, CA 93106, USA}

\author{Veronika E. Hubeny}
\email{veronika@physics.ucdavis.edu}
\affiliation{Center for Quantum Mathematics and Physics (QMAP),
Department of Physics, University of California, Davis, CA 95616 USA}

\author{Mukund Rangamani}
\email{mukund@physics.ucdavis.edu}
\affiliation{Center for Quantum Mathematics and Physics (QMAP),
Department of Physics, University of California, Davis, CA 95616 USA}

\author{Massimiliano Rota}
\email{m.rota@uva.nl}
\affiliation{Institute for Theoretical Physics, University of Amsterdam,
Science Park 904, Postbus 94485, 1090 GL Amsterdam, The Netherlands}
\affiliation{Department of Physics, University of California, Santa Barbara, CA 93106, USA}

\begin{abstract}
{
We investigate whether the presence or absence of correlations between subsystems of an $\N$-partite quantum system is solely constrained by the non-negativity and monotonicity of mutual information. We argue that this relatively simple question is in fact very deep  because it  is sensitive to the structure of the set of $\N$-partite states. It can be informed by  inequalities satisfied by the von Neumann entropy, but has the advantage of being more tractable. We exemplify this by deriving the explicit solution for $\N=4$, despite having limited knowledge of the entropic inequalities.  Furthermore, we describe how this question can be tailored to the analysis of more specialized classes of states such as classical probability distributions, stabilizer states, and geometric states in the holographic gauge/gravity duality.
 }
\end{abstract}

\maketitle

\section{Introduction}
\label{sec:intro}

Understanding the structure of the set of $\N$-partite quantum states is of paramount importance for both quantum information theory \cite{Nielsen:2011aa,Wilde:2013aa} and fundamental physics. To attack this problem, one typically considers a measure of correlations between subsystems and analyzes constraints, usually in the form of inequalities, on the possible values of this measure. This is for example the approach followed by  \cite{Pippenger:2003aa,Linden_2005,Cadney:2011aa} for the von Neumann entropy, by \cite{Linden_2013} for R\'enyi entropies, and by \cite{Ibinson_2006} for relative entropy.

In this letter we introduce a novel and simpler version of this problem, independent from \textit{quantification} of correlations, which leads to constraints of a different type.

Let us first make a simple observation about an intuitive property of correlations: a quantum system $\a$ correlated with another system $\b$ is necessarily correlated with any composite system $\b\c$. Similarly, if $\a$ is uncorrelated with $\b\c$, it cannot be correlated with either of the individual subsystems $\b$ or $\c$. In quantum mechanics these statements are an immediate consequence of the monotonicity of mutual information \cite{Nielsen:2011aa}. However, they illustrate a fundamental fact: in a multipartite setting the presence or absence of correlations between some subsystems imposes constraints on the allowed correlations of other subsystems.

Motivated by this, we pose a general question: \emph{are non-negativity and monotonicity of mutual information the only constraints to the presence or absence of correlations between subsystems in an $\N$-partite quantum system}? 

We demonstrate that, while simply phrased, this \emph{quantum marginal independence problem (QMIP)} is sensitive to the structure of the set of $\N$-partite quantum states, and that its relative simplicity could allow for a general solution for arbitrary $\N$. To show this, we will use known results about entropy inequalities, however, the formalization of the problem will make clear that the constraints on this structure are naturally expressed in the language of partially ordered sets and lattices.\footnote{For related work in this direction see 
\cite{Fritz:2012en,Szalay:2018aa}.}

Although the question is formalized here for general multipartite quantum systems, the framework is naturally adaptable to more restricted sets of states. One example is the case of stabilizer states \cite{Gottesman:1997aa,Klappenecker:2002aa,Klappenecker:2002ab} (SMIP), which have a broad spectrum of applications in quantum information and computing. Another interesting example appears in quantum gravity. 

In the context of the gauge/gravity duality we can consider states which are `geometric'. For these, the mutual information has been shown to be monogamous (MMI) \cite{Hayden:2011ag}, and other inequalities have been derived up to $\N=5$ \cite{Bao:2015bfa,Cuenca:2019uzx}. Unlike the more general quantum case (where the solution to the QMIP can at best contain partial information about the existence or the structure of new entropy inequalities), in the holographic context there are reasons to expect that the solution to the holographic marginal independence problem (HMIP) provides a key ingredient for the systematic derivation of these inequalities \cite{Hubeny_2018,Hubeny_2019,Hernndez-Cuenca:aa}, and elucidate the holographic dictionary in quantum gravity.

\section{Definition of the QMIP}
\label{sec:qmip}

Consider an arbitrary $\N$-partite quantum system described by a density matrix $\rho$ acting on a Hilbert space
\begin{equation}
\mathcal{H}=\mathcal{H}_1\otimes\mathcal{H}_2\otimes\cdots\otimes\mathcal{H}_\N\;.
\label{eq:hilbert}
\end{equation}
A non-empty \textit{subsystem} is associated to a collection of factors in \eqref{eq:hilbert}, and is labeled by an index $\si\in\wp_*([\N])$, where $\wp_*([\N])=\wp([\N])\!\smallsetminus\!\emptyset$, and $\wp([\N])$ is the power set of 
$[\N]=\{1,2,\ldots,\N\}$.

Given $\rho$, the von Neumann entropies of the marginals of all subsystems can be arranged into an \textit{entropy vector}
\begin{equation}
\mathbf{S}(\rho)=\{S_\si \in \mathbb{R}_{\geq 0} \,|\, \si\in\wp_*([\N])\}\;.
\end{equation}
These vectors live in a $\D$-dimensional vector space $\mathbb{R}^\D$, with $\D=2^\N-1$,  called \textit{entropy space}.

It will be convenient to consider a purification of $\rho$ by an ancillary subsystem on $\mathcal{H}_{\N+1}$. Consequently, we introduce an \textit{extended} index $\usi\in\wp_*([\N+1])$ to label subsystems which can include the ancilla.

For two non-overlapping subsystems $\usi$ and $\usk$, i.e., $\usi\cap\usk=\emptyset$, their mutual information is defined as 
\begin{equation}
\I_2(\usi:\usk)= S_{\usi}+S_{\usk}-S_{\usi\cup\usk}\;.
\label{eq:mutual_info}
\end{equation}
We will call an expression like \eqref{eq:mutual_info}, for a specific choice of indices in an $\N$-partite setting, an \textit{instance} of the mutual information. When \eqref{eq:mutual_info} vanishes, the two subsystems $\usi$ and $\usk$ are \textit{independent} \cite{Nielsen:2011aa}.

Geometrically, we can think of the equation
\begin{equation}
\hyp{\usi}{\usk}:\quad\I_2(\usi:\usk)=0
\label{eq:hyperplane}
\end{equation}
as representing a hyperplane $\hyp{\usi}{\usk}$ in entropy space.\footnote{For any index $\usj$ which includes the ancilla we replace $S_{\usj}\rightarrow S_{\usj^c}$.} If for a given $\rho$, two subsystems $\usi$ and $\usk$ are independent, the entropy vector $\mathbf{S}(\rho)$ belongs to this hyperplane. More generally, if different instances of the mutual information vanish, then multiple subsystems are pairwise independent and the entropy vector $\mathbf{S}(\rho)$ belongs to the \textit{intersection} of the corresponding hyperplanes. 

The analysis of bipartite independence for the marginals of a multipartite density matrix is then naturally linked to the analysis of how different hyperplanes $\hyp{\usi}{\usk}$ intersect, leading to the following definition:

\begin{definition}
\emph{\bf{(Mutual information arrangement)}} The $\N$-\emph{party mutual information arrangement (MIA$_\N$)} is a hyperplane arrangement in $\N$-party entropy space consisting of the set of all hyperplanes of the form \eqref{eq:hyperplane} for all instances \eqref{eq:mutual_info} of the mutual information. Formally, the \emph{MIA}$_\N$ is defined by the following equation
\begin{equation}
 \prod_{\substack{\usi\cap\usk=\emptyset\\ \usi\cup\usk\neq [\N+1]}}\I_2(\usi:\usk)=0\;. 
\end{equation}
\end{definition}

The MIA is a finite arrangement of cardinality
 $3\genfrac\{\}{0pt}{1}{\N+1}{3}\sim(3/2)^{\N+2}$ \cite{Hubeny_2019}, and it is invariant under any permutation of the $\N$ fundamental subsystems and the ancilla (cf. \cite{Stanley:2004aa,Orlik:1992aa} for hyperplane arrangements).

Since the MIA is a \textit{central} arrangement (the intersection of all the hyperplanes contains the origin), the intersection of an arbitrary collection of hyperplanes is a \textit{linear subspace} of entropy space. It will be convenient to distinguish a subspace $x$ from its \textit{variety} $\mathscr{V}(x)$, which comprises of the set of points belonging to it. 

The set of all these subspaces can be made into a partially ordered set by introducing an order relation, customarily chosen as reverse inclusion (as it gives a richer structure to the poset of central arrangements \cite{Orlik:1992aa}), viz., 
\begin{equation}
x\preceq y\iff \mathscr{V}(x)\supseteq \mathscr{V}(y)\;.
\label{eq:order}
\end{equation}

Moreover, the intersection poset has the structure of a \textit{lattice}\footnote{A \textit{lattice} is a poset such that any subset has a supremum and an infimum \cite{Birkhoff:1940aa}.} \cite{Orlik:1992aa}. Notice in particular that the full entropy space is an element of this poset. We will refer to this lattice as the MIA-\textit{lattice} and denote it by $\mathbb{L}(\text{MIA})$. The elements of the MIA-lattice are precisely what we need to characterize the bipartite independence of subsystems for a given density matrix:

\begin{definition}
\emph{\bf{(Pattern of marginal independence)}} A \emph{pattern of marginal independence} in an $\N$-partite setting is a linear subspace $\mathbb{s}$ of entropy space which is an element of $\mathbb{L}(\text{\emph{MIA}}_\N)$. 
\end{definition}

To any vector $\mathbf{S}$ of entropy space we can associate a pattern of marginal independence as follows
\begin{equation}
\pmi(\mathbf{S})=\sup\;\{\mathbb{s}\in\mathbb{L}(\text{MIA}_\N)\,|\, \mathbf{S}\in\mathscr{V}(\mathbb{s})\}\;.
\label{eq:pmi}
\end{equation}
The map \eqref{eq:pmi} is well defined since the supremum is guaranteed to exist\footnote{ Since the entire entropy space is an element of $\mathbb{L}(\text{MIA})$ the map $\pmi(\mathbf{S})$ exists for any $\mathbf{S}$.} (because $\mathbb{L}(\text{MIA}_\N)$ is a lattice), and it is unique. Equivalently, one can think of the subspace $\pmi(\mathbf{S})$ as the lowest-dimension subspace in the MIA-lattice which contains $\mathbf{S}$. We can then use this map to associate a pattern of marginal independence $\pmi(\mathbf{S}(\rho))$ to any density matrix $\rho$.

Conversely, given a pattern of marginal independence, we can ask if there exists an $\N$-partite density matrix $\rho$ which realizes it.
\begin{definition}
\emph{\bf{(Realizable pattern)}} A pattern of marginal independence $\mathbb{s}\in\mathbb{L}(\text{\emph{MIA}}_\N)$ is \emph{realizable} (or, equivalently, the corresponding subspace is \emph{accessible}), if there exists an $\N$-partite Hilbert space \eqref{eq:hilbert}, and a density matrix $\rho$, such that $\mathbb{s}=\pmi(\mathbf{S}(\rho))$. 
\end{definition}

With these definitions at hand, we can finally introduce the formal statement of the \textit{quantum marginal independence problem} (QMIP)
\begin{problem}
\emph{\bf{(Quantum marginal independence)}} For any given $\N$, what are all the realizable patterns of marginal independence?
\end{problem}

While we have defined the problem for arbitrary quantum states, it is clear that one can consider more specialized versions of it by restricting to particular classes of states. For example, one can define the same problem for stabilizer states (SMIP), classical probability distributions (CMIP) or geometric states in holography (HMIP). 

In the general case, for arbitrary $\N$, not all subspaces of the MIA-lattice are accessible as a consequence of universal inequalities satisfied by the mutual information. For restricted sets of states, additional inequalities could hold. In \S\ref{sec:conj} we explain how to derive the proper subset of the MIA-lattice which is `compatible' with a given set of inequalities.

\section{Constraining the QMIP by entropy inequalities}
\label{sec:conj}

It is well known that for any density matrix and any choice of subsystems, the mutual information satisfies subadditivity (SA) and strong subadditivity (SSA)  \cite{Nielsen:2011aa},
\begin{subequations}
\begin{align}
&\text{SA}: \quad \;\; \I_2(\usi:\usk)\geq 0 \,,
\label{eq:sa}\\
&\text{SSA}: \quad \I_2(\usi:\usj\cup\usk)-\I_2(\usi:\usk)\geq 0\,.
\label{eq:ssa}
\end{align}
\label{eq:sassa}
\end{subequations}

Because of these constraints, some patterns of marginal independence are certainly \textit{non-realizable}. We now determine what subset of patterns of the MIA-lattice is actually relevant to the QMIP. We begin by defining the following object in entropy space:

\begin{definition}
\emph{\bf{(SSA polyhedron)}} The $\N$-party \emph{strong-subadditivity polyhedron}, denoted by $\emph{\ssacone{\N}}$, is the intersection of the half-spaces associated to all possible instances of the inequalities \eqref{eq:sassa}. 
\label{def:polyhedron}
\end{definition}

For any polyhedron, a \textit{face} $\mathbb{f}$ is defined as the intersection of the polyhedron with a hyperplane bounding a half-space which contains the polyhedron. The set of faces of a polyhedron again forms a lattice, with the entire polyhedron also being included as a face \cite{Ziegler:1995aa}. We refer to the lattice of faces of the SSA polyhedron as the SSA-\textit{lattice}, denoted by $\mathbb{L}(\ssacone{\N})$. Like for the MIA-lattice, we distinguish between an element $\mathbb{f}$ of $\mathbb{L}(\ssacone{\N})$ and its variety  $\mathscr{V}(\mathbb{f})$. 

We define the \textit{interior} of a face $\mathbb{f}$ as the \textit{variety} 
\begin{equation}
\text{int}\;\mathbb{f}=\mathscr{V}(\mathbb{f})\smallsetminus\partial\mathscr{V}(\mathbb{f}),
\label{eq:interior}
\end{equation}
taking $\text{int}\;\mathbf{0}=\mathbf{0}$ for the origin.  It then follows that for any face, all the points in the interior are mapped by \eqref{eq:pmi} to the same element of the MIA-lattice:
\begin{lemma}
For any $\N$, and for any face $\mathbb{f}\in\mathbb{L}(\emph{\ssacone{\N}})$,
\begin{equation}
 \forall\,\mathbf{S}\in\text{\emph{int}}\,\mathbb{f} \,, \quad \pmit(\mathbf{S})=\mathbb{s}\;\; {\rm with}\;\, \mathbb{s} \in \mathbb{L}(\rm{MIA}_\N).
\end{equation}
\label{lem:lemma1}
\end{lemma}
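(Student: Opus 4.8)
The plan is to show that the map $\pmit$ is constant on $\text{int}\,\mathbb{f}$; since $\pmit$ is by construction a map into $\mathbb{L}(\text{MIA}_\N)$, that is all the statement asks for. First I would unpack \eqref{eq:pmi}. Because $\text{MIA}_\N$ is a central arrangement, its intersection lattice is closed under intersecting varieties and has the whole entropy space $\mathbb{R}^\D$ as its bottom element, so for any $\mathbf{S}$ the supremum in \eqref{eq:pmi} is attained by the intersection of all hyperplanes of $\text{MIA}_\N$ through $\mathbf{S}$,
\begin{equation}
\pmit(\mathbf{S})=\bigcap_{\substack{H\in\text{MIA}_\N\\ \mathbf{S}\in\mathscr{V}(H)}}H\;,
\end{equation}
with the empty intersection understood as $\mathbb{R}^\D$. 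In particular $\pmit(\mathbf{S})$ depends on $\mathbf{S}$ only through the collection $\mathcal{T}(\mathbf{S})=\{H\in\text{MIA}_\N\mid\mathbf{S}\in\mathscr{V}(H)\}$ of hyperplanes containing it, so it suffices to prove that $\mathcal{T}(\mathbf{S})$ is the same for all $\mathbf{S}\in\text{int}\,\mathbb{f}$.

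The key point is that every hyperplane of $\text{MIA}_\N$ is a \emph{supporting} hyperplane of $\ssacone{\N}$: the hyperplane $\hyp{\usi}{\usk}=\{\I_2(\usi:\usk)=0\}$ is the boundary of the half-space $\{\I_2(\usi:\usk)\geq0\}$, which contains $\ssacone{\N}$ precisely because \eqref{eq:sa} is one of the inequalities defining $\ssacone{\N}$. I would then invoke the standard polyhedral fact that the intersection of a face $\mathbb{f}$ with a supporting hyperplane $H$ is again a face of $\mathbb{f}$, which is either all of $\mathbb{f}$ (when $\mathscr{V}(\mathbb{f})\subseteq H$) or a proper face of $\mathbb{f}$, the latter being disjoint from $\text{int}\,\mathbb{f}$. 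Consequently, for every $\mathbf{S}\in\text{int}\,\mathbb{f}$ one has $\mathbf{S}\in H$ if and only if $\mathscr{V}(\mathbb{f})\subseteq H$, so $\mathcal{T}(\mathbf{S})=\{H\in\text{MIA}_\N\mid\mathscr{V}(\mathbb{f})\subseteq H\}$, which no longer depends on the point $\mathbf{S}$. Combined with the first paragraph, this proves the claim: $\pmit$ is constant on $\text{int}\,\mathbb{f}$, and its value is the element $\mathbb{s}\in\mathbb{L}(\text{MIA}_\N)$ cut out by those hyperplanes of $\text{MIA}_\N$ which contain $\mathbb{f}$. The degenerate case $\mathbb{f}=\mathbf{0}$, with the convention $\text{int}\,\mathbf{0}=\mathbf{0}$, is trivial and gives the center of $\text{MIA}_\N$.

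I do not expect any genuine obstacle; the statement is essentially the compatibility of the face stratification of $\ssacone{\N}$ with the arrangement $\text{MIA}_\N$, and the proof is a short combination of two elementary facts. The only thing that needs care is bookkeeping: ``interior'' must be read throughout as the \emph{relative} interior (inside the affine hull of the face), so that the polyhedral fact above applies to faces of every dimension and the stated convention for $\mathbf{0}$ is recovered rather than imposed ad hoc. One might also worry that on a face where some instance of \eqref{eq:ssa} is saturated, for instance $\I_2(\usi:\usj\cup\usk)=\I_2(\usi:\usk)$, the set of vanishing mutual informations could vary over $\text{int}\,\mathbb{f}$; but whether $\I_2(\usi:\usk)$ vanishes is controlled by the SA inequality \eqref{eq:sa}, which is already among the half-spaces defining $\ssacone{\N}$, so this case is subsumed in the argument above.
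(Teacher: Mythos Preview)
Your proof is correct and follows essentially the same route as the paper: the key step in both is that every hyperplane of $\text{MIA}_\N$ is a supporting hyperplane of $\ssacone{\N}$ (because SA is among the defining inequalities), so its intersection with any face is either the whole face or a proper subface disjoint from the relative interior. The paper packages this as a contrapositive argument while you give the direct version via $\mathcal{T}(\mathbf{S})$, but the content is the same.
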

\begin{proof}
Consider a face $\mathbb{f}$ and a vector $\mathbf{S}^*\in\text{int}\,\mathbb{f}$, and let $\mathbb{s}^*=\pmi(\mathbf{S}^*)$. If for another vector $\mathbf{S}\in\mathbb{f}$ one has $\pmi(\mathbf{S}) = \mathbb{s}\neq\mathbb{s}^*$, then $\mathbb{s} \succ \mathbb{s}^*$. Thus, there exists a hyperplane $\mathbb{h}\in\mathbb{L}(\text{MIA}_\N)$ such that $\mathbf{S}\in\mathscr{V}(\mathbb{h})$ but $\mathbf{S}^*\notin\mathscr{V}(\mathbb{h})$. By Def.~\ref{def:polyhedron}, any $\mathbb{h}\in\mathbb{L}(\text{MIA}_\N)$ is the boundary of a subspace which includes $\ssacone{\N}$, therefore its intersection with the polyhedron is a face $\mathbb{f}'\neq\mathbb{f}$. This implies that $\mathbf{S}\notin\text{int}\;\mathbb{f}$. 
\end{proof}

We denote the element of $\mathbb{L}(\text{MIA})$ which is the image of $\text{int}\,\mathbb{f}$ under \eqref{eq:pmi} by $\pmit(\mathbb{f})$. Under this mapping, different faces can have the same image; this fact will play an important role in \S\ref{sec:cases}.
We are finally ready to define the construct of interest, namely the SSA-compatible subset of the MIA-lattice
\begin{equation}
\mathbb{G}_\N=\pmit\left(\mathbb{L}(\ssacone{\N})\right).
\label{eq:Gdef}
\end{equation}

The reason for focusing on this subset owes to 
\begin{lemma}
For any $\N$, a pattern of marginal independence $\mathbb{s}\in\mathbb{L}(\text{\emph{MIA}}_\N)$ which is not in $\mathbb{G}_\N$ is non-realizable. Furthermore, for any $\mathbb{s}\in\mathbb{G}_\N$, there exists at least one entropy vector $\mathbf{S}\in\mathbb{R}^\D$ with  $\pmit(\mathbf{S})=\mathbb{s}$ which satisfies all instances of \eqref{eq:sassa}.
\end{lemma}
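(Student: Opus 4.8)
The plan is to obtain both halves of the statement as essentially immediate consequences of Lemma~\ref{lem:lemma1}, the definition \eqref{eq:Gdef}, and the elementary fact that the relative interiors of the faces of a polyhedron cover the polyhedron.

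For the first half I would argue by contraposition: assume $\mathbb{s}\in\mathbb{L}(\text{MIA}_\N)$ is realizable, and show $\mathbb{s}\in\mathbb{G}_\N$. By the definition of realizability there is an $\N$-partite density matrix $\rho$ (purified by the ancilla) with $\mathbb{s}=\pmit(\mathbf{S}(\rho))$. Since the von Neumann entropy obeys SA and SSA for every admissible choice of subsystems, $\mathbf{S}(\rho)$ satisfies every instance of \eqref{eq:sassa}, i.e. $\mathbf{S}(\rho)\in\mathscr{V}(\ssacone{\N})$. Invoking the decomposition of a polyhedron into the disjoint relative interiors of its faces \cite{Ziegler:1995aa} (with the convention $\text{int}\,\mathbf{0}=\mathbf{0}$ for the apex), there is a face $\mathbb{f}\in\mathbb{L}(\ssacone{\N})$ with $\mathbf{S}(\rho)\in\text{int}\,\mathbb{f}$. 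Lemma~\ref{lem:lemma1} then gives $\pmit(\mathbf{S}(\rho))=\pmit(\mathbb{f})$, so $\mathbb{s}=\pmit(\mathbb{f})\in\pmit(\mathbb{L}(\ssacone{\N}))=\mathbb{G}_\N$. Contraposing, any $\mathbb{s}\notin\mathbb{G}_\N$ is non-realizable.

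For the second half I would simply unwind \eqref{eq:Gdef}. If $\mathbb{s}\in\mathbb{G}_\N$, then $\mathbb{s}=\pmit(\mathbb{f})$ for some face $\mathbb{f}\in\mathbb{L}(\ssacone{\N})$. Since a nonempty face has nonempty relative interior, I can pick $\mathbf{S}\in\text{int}\,\mathbb{f}$; by Lemma~\ref{lem:lemma1} it has $\pmit(\mathbf{S})=\pmit(\mathbb{f})=\mathbb{s}$, and because $\text{int}\,\mathbb{f}\subseteq\mathscr{V}(\mathbb{f})\subseteq\mathscr{V}(\ssacone{\N})$ this $\mathbf{S}$ satisfies all instances of \eqref{eq:sassa}. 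That is the required entropy vector.

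The argument carries almost no computational content; the only points that warrant care are the two convex-geometry inputs — that the relative interiors of the faces of $\ssacone{\N}$ partition $\mathscr{V}(\ssacone{\N})$ (needed to assign a realizable entropy vector to some $\text{int}\,\mathbb{f}$) and that every nonempty face has nonempty relative interior (needed for the witness vector in the second half). Both are standard, but one should quote them in their full polyhedral rather than merely polytopal generality, since $\ssacone{\N}$ is an unbounded cone, and the $0$-dimensional face must be handled via the convention $\text{int}\,\mathbf{0}=\mathbf{0}$ set around \eqref{eq:interior}. All the real work has been done upstream, in Lemma~\ref{lem:lemma1} and the definition of $\mathbb{G}_\N$.
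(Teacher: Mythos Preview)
Your argument is correct and matches the paper's own proof essentially step for step: both halves use the partition of $\ssacone{\N}$ into relative interiors of faces together with Lemma~\ref{lem:lemma1} and the definition \eqref{eq:Gdef}, with the second half reduced to $\text{int}\,\mathbb{f}\neq\emptyset$. Your explicit flagging of the polyhedral (rather than polytopal) generality and the $\text{int}\,\mathbf{0}=\mathbf{0}$ convention is a nice bit of extra care, but otherwise there is no substantive difference.
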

\begin{proof}
i) From \eqref{eq:interior} and basic properties of polyhedra \cite{Ziegler:1995aa}, it follows that any vector $\mathbf{S}\in\mathbb{R}^\D$ which satisfies all instances of \eqref{eq:sassa} is an element of the interior of at least one face of the SSA polyhedron. Its image under \eqref{eq:pmi} is therefore in $\mathbb{G}_\N$. Hence if a vector $\mathbf{S}\in\mathbb{R}^\D$ is mapped to an element $\mathbb{s}^*\in\mathbb{L}(\text{MIA}_\N)$ which is not in $\mathbb{G}_\N$ it must violate at least one instance of \eqref{eq:sassa}. Therefore for any Hilbert space \eqref{eq:hilbert} there is no density matrix such that $\pmi(\mathbf{S}(\rho))=\mathbb{s}^*$.

ii) Consider an element $\mathbb{s}\in\mathbb{G}_\N$. By Lemma \ref{lem:lemma1} and definition \eqref{eq:Gdef}, there exists a face $\mathbb{f}\in\ssacone{\N}$ such that $\mathscr{P}(\mathbf{S})=\mathbb{s}$ for all $\mathbf{S}\in\text{int}\;\mathbb{f}$. The result follows from the fact that $\text{int}\;\mathbb{f}\neq\emptyset$.
\end{proof}

Therefore, $\mathbb{G}_\N$ contains exactly all the patterns of marginal independence which could possibly be realized in accordance to SA and SSA. While this does not imply that each pattern in $\mathbb{G}_\N$ actually can be realized, it allows us to state an upgraded version of the QMIP:

\begin{problem}
For any given $\N$, which patterns of marginal independence in $\mathbb{G}_\N$ are realizable?
\end{problem}

As mentioned previously, one might be interested in restricted versions of the QMIP. If the states of interest are known to satisfy a collection $\{\text{Q}\}$ of additional entropy inequalities, the construction presented here can readily be adapted, since it did not depend on any structural detail of \eqref{eq:sassa} other than linearity. One first defines a new Q-polyhedron by adding the list $\{\text{Q}\}$ to Def.~\ref{def:polyhedron}, and then proceeds to define a new subset $\mathbb{G}_{\N}^\text{Q}$ of patterns that could be realized by states in the allowed class. We employ this strategy to completely solve the QMIP for $\N=4$ and analyze the more involved $\N=5$ case in \S\ref{sec:cases}.

\section{Solution to the QMIP for $\N\leq 5$}
\label{sec:cases}

We now present the solution to the QMIP for $\N\leq 4$, along with some observations about the $\N=5$ case, and its relation to the restricted problem for stabilizer states (SMIP).

\medskip 
\noindent
{\bf $\N=2$:}
This case is special since the set of inequalities \eqref{eq:sassa} reduces to instances of SA only and defines a simpler object: the SA-polyhedron. The QMIP is unconstrained, in the sense that $\mathbb{G}_2= \mathbb{L}(\text{MIA}_2)$. It is straightforward to check that every element of $\mathbb{G}_2$ is realizable.

\medskip 
\noindent
{\bf $\N=3$:}
In this case, entropy inequalities do constrain the set of possibly realizable patterns, $\mathbb{G}_3\subset \mathbb{L}(\text{MIA}_3)$. The question then is whether each pattern in $\mathbb{G}_3$ is realizable. To check that this is the case, we first find the extreme rays of the SSA polyhedron,\footnote{All the polyhedra we work with are in fact polyhedral cones.} and then construct quantum states with entropy vectors corresponding to these rays. These are Bell pairs, the GHZ state of four qubits, and a four-partite perfect state \citep{Pippenger:2003aa}. Taking appropriate tensor products, one can explicitly construct quantum states realizing all patterns in $\mathbb{G}_3$.\footnote{If $\rho_1,\rho_2$ realize the patterns $\mathbb{s}_1,\mathbb{s}_2$, then $\rho_1\otimes\rho_2$ realizes the pattern $\mathbb{s}=\mathbb{s}_1+\mathbb{s}_2$, where $\mathscr{V}(\mathbb{s})=\mathscr{V}(\mathbb{s}_1)+\mathscr{V}(\mathbb{s}_2)$ and ``$+$'' is the usual (non-direct) sum of linear subspaces.}

\medskip 
\noindent
{\bf $\N=4$:} Density matrices for $4$ parties saturating certain instances of SSA satisfy new inequalities which are independent from \eqref{eq:sassa} \cite{Linden_2005}. It is, however, unknown if unconstrained inequalities exist \cite{Cadney:2011aa}. Interestingly, the QMIP nevertheless turns out to be constrained by \eqref{eq:sassa} alone.

To see this, consider the restricted problem of focusing on stabilizer states \cite{Klappenecker:2002aa,Klappenecker:2002ab}. For $4$ parties, the entropies of stabilizer states satisfy the Ingleton inequality \cite{Linden:2013aa}. We include this additional constraint in defining the SMIP as indicated in \S\ref{sec:conj}. From the `Ingleton polyhedron' obtained by adding this inequality to the list in Def.~\ref{def:polyhedron}, we find the $\mathbb{G}_4^\text{ING}\subset\mathbb{L}(\text{MIA}_4)$, the subset of patterns realizable by stabilizer states. Extreme rays of this polyhedron have been explicitly realized by quantum states \cite{Linden:2013aa}. Therefore, the patterns in $\mathbb{G}_4^\text{ING}$ may be realized taking appropriate tensor products of these states. Constructing the usual set $\mathbb{G}_4\subset\mathbb{L}(\text{MIA}_4)$ from the SSA polyhedron we then check that $\mathbb{G}_4=\mathbb{G}_4^\text{ING}$.

\medskip 
\noindent
{\bf $\N=5$:} Little is known about the $5$-partite stabilizer cone (let alone the quantum entropy cone). Nevertheless, we motivate the need for new constraints. In the preceding examples, all patterns in $\mathbb{G}_{\N}$ were realizable, in particular already within the class of stabilizer states. This ceases to be true for $\N=5$.

Any entropy vector realized by a $5$-party stabilizer state has to satisfy the Ingleton inequality. Uplifting the $\N=4$ Ingleton polyhedron supplemented with all instances of the Ingleton inequality for $\N=5$, we build $\mathbb{G}^\text{ING}_5$. One then checks that $\mathbb{G}_5\supset\mathbb{G}_5^\text{ING}$. This result suggests two interesting potential scenarios that we discuss in \S\ref{sec:discuss}. Furthermore, one can check that some patterns in $\mathbb{G}_5\smallsetminus\mathbb{G}_5^\text{ING}$ can only potentially be realized by density matrices which violate monotonicity of the entropy, and therefore not by classical probability distributions.

\section{Discussion}
\label{sec:discuss}

\begin{table}[tb]
	\setlength{\tabcolsep}{6pt}
	\begin{tabular}{|m{6pt}|c|c|c|}
		\hline\vspace{1pt}
		$\N\;\quad$ & QMIP  & SMIP & HMIP \\ \hline\vspace{1.5pt}
		\centering2 & SA &  SA & SA \\
		\centering3 & SSA &  SSA & SSA \\
		\centering4 & SSA & SSA & MMI \\
		\centering5 & SSA (?) &  ING (?) & MMI \\
		\hline
	\end{tabular}
	\caption{Solutions to the Quantum (Q), Stabilizer (S) and Holographic (H) MIP for $\N\in\{2,3,4,5\}$.
		Question marks denote conjectures.}
	\label{tab:lowNMIPs}
\end{table}

Table~\ref{tab:lowNMIPs} summarizes the solutions to the various MIPs we considered. For each case, the solution is labeled by the weakest inequality Q which should be added to Def.~\ref{def:polyhedron} in order to obtain a set of patterns $\mathbb{G}^\text{Q}_\N$ which are all realizable by states in the corresponding class.

The holographic case (HMIP) will be discussed extensively in \citep{Hernndez-Cuenca:aa}, but we summarize here the most salient aspects of the solution for $\N\leq5$. While for $\N=3$ geometric states are constrained by MMI \cite{Hayden:2011ag}, the HMIP is insensitive to it, since geometric states allow to realize all possible patterns of marginal independence which are consistent with quantum mechanics. Interestingly, this inequality becomes instead crucial for $\N=4$, and it remains the only relevant constraint for the solution to this problem even for $\N=5$, despite the presence of additional inequalities \cite{Bao:2015bfa,Cuenca:2019uzx}.

Another specialized version of the problem that we did not discuss pertains to classical probabilities distributions (CMIP). In this case, an interesting question is whether \textit{Shannon inequalities} are the only necessary constraints, or if instead additional inequalities impose further restrictions \cite{Zhang:1998aa,Matus:2007aa,K.-Makarychev:2002aa}. The question is also relevant for the analysis of the SMIP, since stabilizer states satisfy all balanced classical inequalities \cite{Gross:2013aa}.

Regarding the SMIP, and its relation to the general QMIP, our analysis of the $\N=5$ case suggests two potential general scenarios which deserve further investigation. One possibility is that the full solution to the QMIP is given by $\mathbb{G}_\N$ itself, with non-stabilizer states being essential for the realization of certain patterns. Another is that stabilizer states suffice to realize all patterns allowed in quantum mechanics, and since they realize a proper subset of $\mathbb{G}_\N$, there must exist new entropy inequalities.

Finally, let us comment on the general logic of our approach and how it should be extended. While we have formulated the QMIP in the language of ordered sets, we have used the varieties associated to the elements of these sets to relate the QMIP to known results about inequalities for the von Neumann entropy. This strategy was sufficient for the scope of this letter, and in particular to solve the $\N=4$ case. However, in order to use the QMIP to extract new structural properties of the set of $\N$-partite quantum states, one will need to understand how to solve the QMIP \textit{without} relying on such inequalities. In particular, this should make it possible to formulate the constraints which determine the subset of realizable patterns of the MIA-lattice purely in the language of order theory and combinatorics.

\acknowledgments 
It is a pleasure to thank Temple He, Christian Majenz and Michael Walter for discussions.
S.~Hern\'andez Cuenca was supported by fellowship LCF/BQ/AA17/11610002 from ``la Caixa'' Foundation (ID 100010434) and by NSF grant PHY-1801805.
V.~Hubeny and M.~Rangamani were supported by U.S.\ Department of Energy grant {DE-SC0019480} under the HEP-QIS QuantISED program and by funds from the University of California.
M.~Rota was supported by the Simons Foundation, under the ``It From Qubit'' collaboration, by funds from the University of California, and by the University of Amsterdam, via the ERC Consolidator Grant QUANTIVIOL.


\providecommand{\href}[2]{#2}\begingroup\raggedright\endgroup

\end{document}